\begin{document}

\mainmatter  

\title{A Variational Approach to Sparse Model Error Estimation in Cardiac Electrophysiological Imaging}

\titlerunning{Variational sparse error estimation}

%
%
\author{Sandesh Ghimire\inst{1}%
\and John L Sapp\inst{2}\and Milan Horacek\inst{2}\and Linwei Wang\inst{1}
}
%
\authorrunning{S.Ghimire et. al}

\institute{ Rochester Institute of Technology, Rochester, NY 14623, USA \\
\email{sg9872@rit.edu}\\
 \and Dalhouse University, Halifax, NS, Canada\\
}

%
%

\maketitle

\begin{abstract}
Noninvasive reconstruction of cardiac electrical activity 
from surface electrocardiograms (ECG) 
involves solving an ill-posed inverse problem. 
Cardiac electrophysiological (EP) models 
have been used as important \emph{a priori} knowledge 
to constrain this inverse problem. 
However, the reconstruction  
suffer from inaccuracy and uncertainty of the prior model itself
which could be mitigated by estimating \emph{a priori} model error.
Unfortunately, 
due to the need to handle an additional large number of unknowns 
in a problem that already suffers from ill-posedness, 
model error estimation remains an unresolved challenge. 
In this paper, 
we address this issue 
by modeling and estimating the \emph{a priori} model error 
in a low dimensional space 
using a novel sparse prior 
based on the variational approximation of L0 norm.
This prior is used in a 
posterior regularized Bayesian formulation 
to quantify the error in \emph{a priori} EP model 
during  
the reconstruction of transmural action potential 
from ECG data. 
Through synthetic and real-data experiments, 
we demonstrate the ability of the presented method 
to timely capture \emph{a priori} model error 
and thus to improve reconstruction accuracy 
compared to approaches without 
model error correction. 
\keywords{Variational method, sparse error estimation, posterior regularized bayes, electrophysiological imaging}
\end{abstract}

\section{Introduction}
Noninvasive electrophysiological (EP) imaging aims at a mathematical and computational reconstruction of cardiac sources from high density electrocardiogram (ECG) signals. It requires solving an inverse problem with severe ill-posedness, especially when cardiac sources are solved transmurally throughout the 3D myocardium. To overcome this challenge, an important approach is to incorporate a constraining model encoding \emph{a priori} physiological knowledge about the electrical activity inside the heart. Examples of such models include simple step jump functions \cite{pullan01} 
to describe the activation of action potential,  parameterized curve models to describe the spatiotemporal wavefront evolution \cite{ghodrati06}, and biophysical EP models to describe the dynamics of action potential via differential equations \cite{nielsen13,wang10}. While the use of such \emph{a priori} models is effective in regularizing the ill-posed problem, often the inaccuracy and uncertainty of the model itself creates errors and uncertainty in the solution \cite{erem14}.
This issue of model inaccuracy and the resulting solution uncertainty has been studied with convex relaxation of the original problem of EP imaging \cite{erem14}. Similarly, uncertainty in the inverse solution of EP imaging due to model error, and its consideration to improve the clinical interpretation was explored in \cite{xu14}. 
While these works have highlighted the importance of model error in EP imaging, addressing these errors when the prior knowledge is in the form of a complex physiological model remains a challenge. 

Outside EP imaging, 
the issue of error and uncertainty in model prediction has been an active area of research 
in weather and climate forecasting. 
One common approach is to 
model the error as a zero-mean Gaussian distribution 
with an unknown covariance to be inferred from data. 
To address high dimensionality, 
the covariance matrix is usually parameterized by a small set of parameters, 
such as via a linear combination of basis matrices 
weighted by parameters to be learned from data \cite{dee95}. 
Another common approach is to model the error, 
for example as a function of error in model parameters \cite{onatski03}. 
In either case, 
the key is to exploit the prior knowledge 
about the nature of the error to reduce dimension.

In this paper, 
we exploit the low-dimensional nature 
of cardiac wavefront propagation 
to formulate a sparse model 
for the prediction error 
made by a \emph{a priori} EP model.  
Wavefront, 
which can be thought as the spatial gradient of action potential, 
is always localized in a small region at a time. 
This provides the motivation to 
focus on the model inaccuracy in the spatial gradient 
(\textit{i.e.}, error of the predicted wavefront), 
which will reduce the dimension of the unknown error 
while preserving the accuracy of the low-dimensional representation of the inverse solution. 
To do so, 
we present a posterior regularized Bayesian approach 
to the reconstruction of transmural action potential from ECG data, 
in which a generalized Gaussian distribution 
is formulated to model the sparse error 
in the gradient domain. 
A variational lower bound of the generalized Gaussian distribution 
is derived, 
based on which we
quantify the model error while
simultaneously
reconstructing the action potential.  
As we will show mathematically, 
solution of the inference problem will amount to 
iteratively estimating the prior covariance matrix 
of the predicted action potential, 
such that the condition of sparsity in error is satisfied. 
In both synthetic and real-data experiments, 
we demonstrate the ability of the presented method 
to timely capture the prediction error in the \emph{a priori} EP model, 
and thereby to deliver more accurate reconstructions 
of action potential 
compared to approaches without model error correction. 

\section{Bayesian Formulation with Error Modeling}

The relation between ECG 
and transmural action potential 
can be described by a quasi-static approximation 
of the Maxwell's equations for an electromagnetic field \cite{wang10}. 
Solving these equations numerically 
on a subject-specific heart-torso model, 
a linear \emph{forward} model can be obtained as
$
\textbf{y}_k=\textbf{H}\textbf{u}_k$,
where $\textbf{H}$ is called the forward matrix, 
and $ \textbf{y}_k$ and $\textbf{u}_k$ 
respectively denote a vector of ECG and action potential 
at any time instant $k$. 
The inverse problem 
constitutes the estimation of $\textbf{u}_k$ 
given data $ \textbf{y}_k$ for all time $k$.

One of the important constraints used to regularize 
the ill-posed inverse problem of ECG is the 
cardiac EP model describing the spatiotemporal propagation of 
the action potential inside the heart \cite{nielsen13,wang10}. 
Assuming that the temporal evolution of the action potential 
follows a markov model, 
the action potential at every time instant $k$ 
can be predicted from 
its value at the previous time instant 
using the \emph{a priori} EP model. 
Without the loss of generality, 
the two-variable Aliev-Panfilov model \cite{panfilov96} 
is adopted in this paper 
for this purpose.  We drop subscript k for clarity  for the rest of the paper.
 
One common way to calculate the posterior distribution of $\textbf{u}$ is to apply Bayes theorem as
$p(\textbf{u}|\textbf{u}_{prev},\textbf{y})={p(\textbf{y}|\textbf{u})p(\textbf{u}|\textbf{u}_{prev})}/{p(\textbf{y})}$, where $\textbf{u}_{prev}$ is the estimated action potential at previous time instant. In addition, we are also interested in calculating posterior distribution of $\textbf{u}$ such that the value of $\textbf{u}$ is constrained within its physiological range, \emph{i.e.,}$-90 mV \le \textbf{u} \le 20 mV$. This is achieved by finding a distribution $r(\textbf{u})$ which minimizes the Kullback-Leibler(KL) divergence $\textrm{KL}(r(\textbf{u})||p(\textbf{u}|\textbf{u}_{prev},\textbf{y}))$ from $r(\textbf{u})$ to $p(\textbf{u}|\textbf{u}_{prev},\textbf{y})$. This is equivalent to finding an optimum solution of following convex problem \cite{zhu14}:
\begin{align}\label{eq6}
\nonumber
p(\textbf{u}|\textbf{y},\textbf{u}_{prev})&=\underset {r(u)}{\min} \hspace{0.3cm} \textrm{KL}(r(\textbf{u})||p(\textbf{u}|\textbf{u}_{prev}))-\int_{\textbf{u}}{r(\textbf{u})\log p(\textbf{y}|\textbf{u})d\textbf{u}}\\
& s.t. \hspace{0.3cm} r(\textbf{u})\in \mathcal{P}_{post}
\end{align}
where $\mathcal{P}_{post}$ denotes the subspace of Gaussian distributions whose mean lie in the range [-90mV,20mV]. 
\subsubsection{Likelihood Function:}
The forward model is incorporated in the likelihood function 
given by
$
p(\textbf{y}|\textbf{u}, \beta )=\mathcal{N}(\textbf{y}|\textbf{Hu},\beta^{-1}\textbf{I})
$
where $\beta^{-1}$ denotes the unknown variance of data error and the inverse variance, $\beta$, follows a Gamma distribution. 
\subsubsection{Prior Distribution with Error Modeling:}
The prediction from 
the \emph{a priori} EP model 
is incorporated into a prior distribution 
on $\textbf{u}$. 
 \begin{eqnarray}\label{eq11}
 \textbf{u}=\textbf{u}_{pd} + \tilde{\boldsymbol{\eta}} =&&\bar{\textbf{u}}_{pd}+\mathcal{N}(0,\textbf{C}_{pd})+\tilde{\boldsymbol{\eta}}\\
 =&&\bar{\textbf{u}}_{pd}+\boldsymbol{\eta}  
 \end{eqnarray}
where $\textbf{u}_{pd}=f(\textbf{u}_{prev})$ 
represents the distribution predicted by EP model, $f(\cdot)$ 
represent the corresponding function and $\tilde{\boldsymbol{\eta}}$ represents model error .   
Because $f(\cdot)$ is nonlinear, 
$\textbf{u}_{pd}$ is calculated by 
taking a set of samples from 
the estimated posterior distribution of $\textbf{u}_{prev}$, 
passing them through the EP model, 
and fitting a Gaussian distribution to the output samples. 
In most existing works that use such a model 
to constrain the reconstruction of $\textbf{u}$, 
$\tilde{\boldsymbol{\eta}}$ is assumed to be 
a zero-mean Gaussian error with 
a known covariance 
that is experimentally adjusted 
for sub-optimal reconstruction accuracy \cite{wang10}. 
In our case, we absorb all the uncertainties into error vector $\boldsymbol{\eta}=\mathcal{N}(0,\textbf{C}_{pd})+\tilde{\boldsymbol{\eta}}$. To directly estimate $\boldsymbol{\eta}$, 
however, 
will translate to the estimation of 
a high-dimensional covariance matrix 
that is infeasible given the limited ECG data.
We address this problem by 
modeling the prediction error $\boldsymbol{\eta}$ 
to be sparse in the spatial gradient domain, 
utilizing the knowledge about the spatial sparsity 
of the action potential wavefront.  
Let $\textbf{D}$ denote a spatial gradient operator 
on the cardiac mesh, 
we denote 
the error in the gradient domain as  
$\textbf{x}=\textbf{D}\boldsymbol{\eta}$. 

In deterministic cases in the field of compressed sensing \cite{chartrand08}, it has been shown that a 
Lp-norm constraint 
with $0<p<1$ can be used 
to promote sparsity, 
where the sparsity increases as p decreases towards 0. 
In order to use the same concept in a probabilistic setting, 
we propose a generalized Gaussian distribution 
as a prior distribution over variable $\textbf{x}$,  
with the value of $p\approx0$ 
to approximate the effect of a L0 norm constraint: 
\begin{eqnarray}
\label{eqGG}
p_{\textbf{x}}(\textbf{x}|\alpha)&=\prod_i\frac{C}{\alpha}\exp(\frac{-\lvert x_i\rvert^p}{\alpha^p})
\end{eqnarray}
where $x_i, i=1,\cdots,N$ are 
independent elements in the vector $\textbf{x}$, $\alpha$ is the parameter of distribution and $C$ is normalization constant. 
The prior distribution  of $\textbf{u}$ can then be obtained by replacing  $\textbf{x}=\textbf{D}\boldsymbol{\eta}=\textbf{D}(\textbf{u}-
\bar{\textbf{u}}_{pd})$ in eq.(\ref{eqGG}):  
\begin{eqnarray}\label{eq12}
p(\textbf{u}|\bar{\textbf{u}}_{pd})\propto{p_{\textbf{x}}(\textbf{D}(\textbf{u}-
\bar{\textbf{u}}_{pd}))}
\end{eqnarray}

\section{Posterior Regularized Bayes for EP Imaging}
\subsubsection{Variational Lower Bound:} 
The posterior distribution in equation (\ref{eq6}) is analytically intractable 
given the 
generalized Gaussian prior as defined in equation (\ref{eq12}) due to presence of absolute value in the exponent.
Below we present a solution strategy 
based on its variational lower bound. 
\begin{lemma}
$ \exp(\frac{-\lvert x \rvert^p}{\alpha^p})\geq \exp\left(-\frac{x^2}{2\tau}-\frac{2-p}{2}(\frac{\alpha^2}{p\tau})^{\frac{p}{p-2}}\right), \hspace{0.01cm}\forall\tau>0, x \in \bbbr, \alpha>0$ \cite{palmer06}
\end{lemma}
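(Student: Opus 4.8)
The plan is to take logarithms and, since $t\mapsto e^{-t}$ is decreasing, reduce the claim to the pointwise bound
$$\frac{|x|^p}{\alpha^p}\;\le\;\frac{x^2}{2\tau}+\frac{2-p}{2}\Bigl(\frac{\alpha^2}{p\tau}\Bigr)^{\frac{p}{p-2}},\qquad\forall\,\tau>0,\ x\in\mathbb{R},\ \alpha>0.$$
Writing $|x|^p=(x^2)^{p/2}$, I would exploit that $\phi(t)=t^{p/2}$ is concave on $[0,\infty)$ whenever $0<p<2$ — which covers the regime of interest here ($0<p<1$, in fact $p\approx0$), as $\phi''(t)=\tfrac{p}{2}(\tfrac{p}{2}-1)t^{p/2-2}<0$. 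Every proper, upper semicontinuous concave function equals the infimum of its affine majorants: $\phi(t)=\inf_{s>0}\{st-\phi^{\ast}(s)\}$ with $\phi^{\ast}(s)=\inf_{t\ge0}\{st-t^{p/2}\}$; in particular $t^{p/2}\le st-\phi^{\ast}(s)$ for every $s>0$.

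Next I would compute $\phi^{\ast}$ explicitly. The inner minimization over $t$ of the convex function $st-t^{p/2}$ has a unique stationary point $t^{\ast}=(2s/p)^{2/(p-2)}$; substituting back and simplifying with $1+\tfrac{2}{p-2}=\tfrac{p}{p-2}$ gives $\phi^{\ast}(s)=-\tfrac{2-p}{2}\bigl(\tfrac{2s}{p}\bigr)^{p/(p-2)}$. Hence, for all $s>0$,
$$|x|^p=(x^2)^{p/2}\;\le\;s\,x^2+\frac{2-p}{2}\Bigl(\frac{2s}{p}\Bigr)^{\frac{p}{p-2}}.$$
Dividing by $\alpha^p$ and choosing the free parameter as $s=\alpha^p/(2\tau)$ makes the first term equal to $x^2/(2\tau)$; a short exponent computation (using $\tfrac{p^2}{p-2}-p=\tfrac{2p}{p-2}$) shows the second term becomes exactly $\tfrac{2-p}{2}(\alpha^2/(p\tau))^{p/(p-2)}$, which is the asserted inequality. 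Exponentiating the negated sides reverses the inequality and gives the lemma, for every $\tau>0$, $x\in\mathbb{R}$, $\alpha>0$.

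I expect the only real obstacle to be the exponent bookkeeping in the last substitution — checking that $s=\alpha^p/(2\tau)$ really turns $\tfrac{2-p}{2}(2s/p)^{p/(p-2)}$ into $\tfrac{2-p}{2}(\alpha^2/(p\tau))^{p/(p-2)}$ — and the routine verification that $t^{\ast}$ is a minimizer rather than a saddle or boundary extremum. As an alternative to the conjugate-function argument, one may instead fix $x,\alpha$ and minimize the right-hand side directly over $\tau>0$: the $\tau$-derivative has a unique zero, and plugging that minimizer back reproduces $|x|^p/\alpha^p$, so the bound holds for all $\tau$. This route is equivalent and needs essentially the same algebra, but the Fenchel-conjugate viewpoint makes transparent why the bound is tight and why concavity of $t^{p/2}$ (i.e. $0<p<2$) is exactly the hypothesis being used.
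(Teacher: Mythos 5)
Your proof is correct: the concave-conjugate representation of $t\mapsto t^{p/2}$ for $0<p<2$, together with the substitution $t=x^2$ and the choice $s=\alpha^p/(2\tau)$, yields exactly the stated bound, and the exponent bookkeeping ($-p+\tfrac{p^2}{p-2}=\tfrac{2p}{p-2}$, so $\alpha^{-p}(\alpha^p)^{p/(p-2)}=(\alpha^2)^{p/(p-2)}$) checks out. The paper states the lemma without proof, citing Palmer et al.; your Fenchel-duality argument is precisely the standard derivation of such variational Gaussian lower bounds for super-Gaussian densities given in that reference, so the only point worth making explicit is that the hypothesis $0<p<2$ (implicit in the paper, which uses $p\approx 0$) is exactly what makes $t^{p/2}$ concave and hence the bound valid.
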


\begin{theorem}
Let $\textbf{x}=(x_1,x_2,...x_N)$ be a vector with independent components each following a generalized normal distribution with same parameters $\alpha, p$. 
then, 
$$
p(\textbf{x}|\alpha)\geq\frac{C'}{\alpha^N}\exp(-\frac{\textbf{x}^T\textbf{Ax}}{2})\exp\left(-\frac{2-p}{2}(\frac{\alpha^2}{p})^{\frac{p}{p-2}}\sum_i\lambda_i^{\frac{p}{p-2}}\right)$$
\end{theorem}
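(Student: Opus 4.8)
The plan is to apply the scalar bound of the Lemma separately to each coordinate $x_i$, using an independent variational parameter $\tau_i>0$ for each one, and then to repackage the resulting sum of quadratic terms into a single quadratic form. Since the content of the statement is essentially combinatorial bookkeeping on top of the quoted Lemma, I expect no genuine analytic difficulty; the care is all in making the constant term collapse into the stated $\sum_i\lambda_i^{p/(p-2)}$ factor.

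First I would use independence to write the joint density as the product $p(\textbf{x}|\alpha)=\prod_i\frac{C}{\alpha}\exp(-|x_i|^p/\alpha^p)$. Applying the Lemma to the $i$-th factor with its own parameter $\tau_i$ gives
$$\frac{C}{\alpha}\exp\!\left(\frac{-|x_i|^p}{\alpha^p}\right)\ \ge\ \frac{C}{\alpha}\exp\!\left(-\frac{x_i^2}{2\tau_i}-\frac{2-p}{2}\Big(\frac{\alpha^2}{p\tau_i}\Big)^{\frac{p}{p-2}}\right),$$
and multiplying these $N$ inequalities (all sides strictly positive) yields $p(\textbf{x}|\alpha)\ge\frac{C^N}{\alpha^N}\exp\!\big(-\sum_i x_i^2/(2\tau_i)\big)\exp\!\big(-\tfrac{2-p}{2}\sum_i(\alpha^2/(p\tau_i))^{p/(p-2)}\big)$.

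Second, I would introduce $\lambda_i:=1/\tau_i$ and set $\textbf{A}:=\mathrm{diag}(\lambda_1,\dots,\lambda_N)$, so that $\sum_i x_i^2/(2\tau_i)=\tfrac12\textbf{x}^T\textbf{A}\textbf{x}$. For the constant term, I would pull the $\alpha$- and $p$-dependent parts out of the exponent via $(\alpha^2/(p\tau_i))^{p/(p-2)}=(\alpha^2/p)^{p/(p-2)}\,\lambda_i^{p/(p-2)}$, which gives $\sum_i(\alpha^2/(p\tau_i))^{p/(p-2)}=(\alpha^2/p)^{p/(p-2)}\sum_i\lambda_i^{p/(p-2)}$. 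Substituting both identities and writing $C'=C^N$ reproduces exactly the claimed inequality.

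The hard part is therefore not a proof step but an interpretive one: I would emphasize that the inequality holds for \emph{every} admissible choice of the $\lambda_i>0$ (equivalently, every positive diagonal $\textbf{A}$), so it is in fact a family of lower bounds, and the tightest one — obtained by maximizing the right-hand side over $\textbf{A}$ — is what the subsequent variational inference exploits. I would also note that keeping $\textbf{A}$ diagonal is precisely what allows the quadratic form $\textbf{x}^T\textbf{A}\textbf{x}$, after the substitution $\textbf{x}=\textbf{D}(\textbf{u}-\bar{\textbf{u}}_{pd})$, to be folded into a Gaussian prior on $\textbf{u}$ with an inverse covariance built from $\textbf{D}^T\textbf{A}\textbf{D}$, which is the form needed in equation (\ref{eq6}).
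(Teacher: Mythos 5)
Your proposal is correct and follows essentially the same route as the paper: apply Lemma~1 factor-by-factor with an independent $\tau_i$ per coordinate, substitute $\lambda_i=\tau_i^{-1}$ with $\textbf{A}=\mathrm{diag}(\boldsymbol{\lambda})$, and set $C'=C^N$. The additional remarks about the bound holding for every positive diagonal $\textbf{A}$ and its role in the downstream variational inference are accurate but go beyond what the paper's proof itself states.
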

\begin{proof}
Using Lemma 1 in prior distribution as defined in eq.(\ref{eqGG}), we have
\begin{eqnarray}
p(\textbf{x}|\alpha)&\geq\frac{C'}{\alpha^N}\exp(-\sum_i\frac{x_i^2}{2\tau_i})\exp\left(-\sum_i\frac{2-p}{2}(\frac{\alpha^2}{p\tau_i})^{\frac{p}{p-2}}\right)\\\label{eq2}
&=\frac{C'}{\alpha^N}\exp(-\frac{\textbf{x}^T\textbf{Ax}}{2})\exp\left(-\frac{2-p}{2}(\frac{\alpha^2}{p})^{\frac{p}{p-2}}\sum_i\lambda_i^{\frac{p}{p-2}}\right)
\end{eqnarray}
where $\textbf{A}=diag(\boldsymbol{\lambda})$, $C'=C^N$, and $\lambda=\tau^{-1}$ is used in eq.(\ref{eq2})
\qed
\end{proof}
Theorem 1 gives us a variational lower bound 
for the generalized Gaussian distribution 
defined in equation (\ref{eqGG}). 
Applying it to equation (\ref{eq12}), 
we obtain a variational lower bound 
for the prior distribution of 
$p(\textbf{u}|\textbf{u}_{pd})$ 
as follows: 
\begin{eqnarray}\label{eq3}
p(\textbf{u}|\bar{\textbf{u}}_{pd}) &\ge & 
q(\textbf{u}|\bar{\textbf{u}}_{pd}, \boldsymbol{\lambda},\alpha)\\ \nonumber
\small
&\propto& \frac{C'}{\alpha^N}\exp(\frac{-(\textbf{u}-\bar{\textbf{u}}_{pd})^T\textbf{D}^T\textbf{AD}(\textbf{u}-\bar{\textbf{u}}_{pd})}{2}-\frac{2-p}{2}(\frac{\alpha^2}{p})^{\frac{p}{p-2}}\sum_i\lambda_i^{\frac{p}{p-2}})
\normalsize
\end{eqnarray}

\subsubsection{ Inference:} 
The posterior distribution in equation (\ref{eq6}) 
can now be solved by 
maximizing the variational parameters of prior distribution as defined in eq.(\ref{eq3}).
It is achieved by following two steps: 
1) estimate 
parameters 
$\boldsymbol{\lambda},\alpha$ and $\beta$ 
by Expectation Maximization (EM), 
and 
2) obtain the posterior distribution of $\textbf{u}$ 
by posterior regularized Bayes \cite{zhu14}.

Parameters $\boldsymbol{\lambda},\alpha$ and $\beta$  
are estimated using EM
by marginalizing over hidden variable $\textbf{u}$ where the two steps : E step and M step are iterated until convergence. In the E-step,
we take expectation and obtain a function $\mathcal{L}$ as
\begin{eqnarray}
\mathcal{L}(\alpha,\beta, \boldsymbol{\lambda})=E_{q(\textbf{u})}[log \left( p(\textbf{u,y}|\alpha, \boldsymbol{\lambda},\beta)\right)+log (\beta)]
\end{eqnarray}
where $E_{q(\textbf{u})}[\hspace{0.1cm}]$ denotes the expectation with respect to the posterior distribution 
of $\textbf{u}$ with fixed parameters 
{$\boldsymbol{\alpha}_{old}, \boldsymbol{\beta}_{old}, \boldsymbol{\lambda}_{old}$}
obtained in previous iteration.
The M-step consists of the following maximization: 
\begin{eqnarray}\label{eq8}
\alpha_{new},\beta_{new},\boldsymbol{\lambda}_{new} &=\underset {\alpha,\beta, \boldsymbol{\lambda}}{argmax} \hspace{0.5 cm} \mathcal{L}(\alpha,\beta, \boldsymbol{\lambda})
\end{eqnarray}

It is noteworthy that for fixed values of parameters,  
the prior distribution 
as defined in equation (\ref{eq3}) 
is Gaussian with the covariance matrix $\textbf{(D}^T \textbf{AD)}^{-1}$ and $\textbf{A}=diag(\boldsymbol{\lambda})$.  
This gives an interesting interpretation of the presented EM procedure as an iterative process to estimate the prior covariance matrix. 
In specific, 
if we focus on the wavefront $\textbf{Du}$, 
the diagonal element of the matrix $\textbf{A}^{-1}$ 
describes the variance of 
each individual element in $\textbf{Du}$. 
We will 
show in section 4 how this estimated 
variance is related to the \emph{a priori} model error.

Once the values of parameters $\hat{\alpha},\hat{\beta}, \hat{\boldsymbol{\lambda}}$  are obtained by the EM procedure, $\hat{\boldsymbol{\lambda}}, \hat{\alpha}$ are plugged in eq.(\ref{eq3}) to obtain prior distribution $q(\textbf{u}|\bar{\textbf{u}}_{pd},\hat{\boldsymbol{\lambda}},\hat{\alpha},)$ which is then used in eq.(\ref{eq6}) to obtain posterior distribution . Since a Gaussian distribution is specified by mean and covariance, the minimization of (\ref{eq6}) 
is done over a space of mean vectors and covariance matrices satisfying the given constraints.

\section{Results}
While we described a general algorithm for model error estimation, we test it for the detection of errors in \emph{a priori} EP model due to the presence of myocardial scar tissue. Using synthetic and real data experiments, we compare the performance of 
the proposed method to  
that without model error estimation \cite{wang10}
\subsubsection{Synthetic Experiments:}
\begin{figure}[tb!]
\centering
\includegraphics[width=0.85\textwidth]{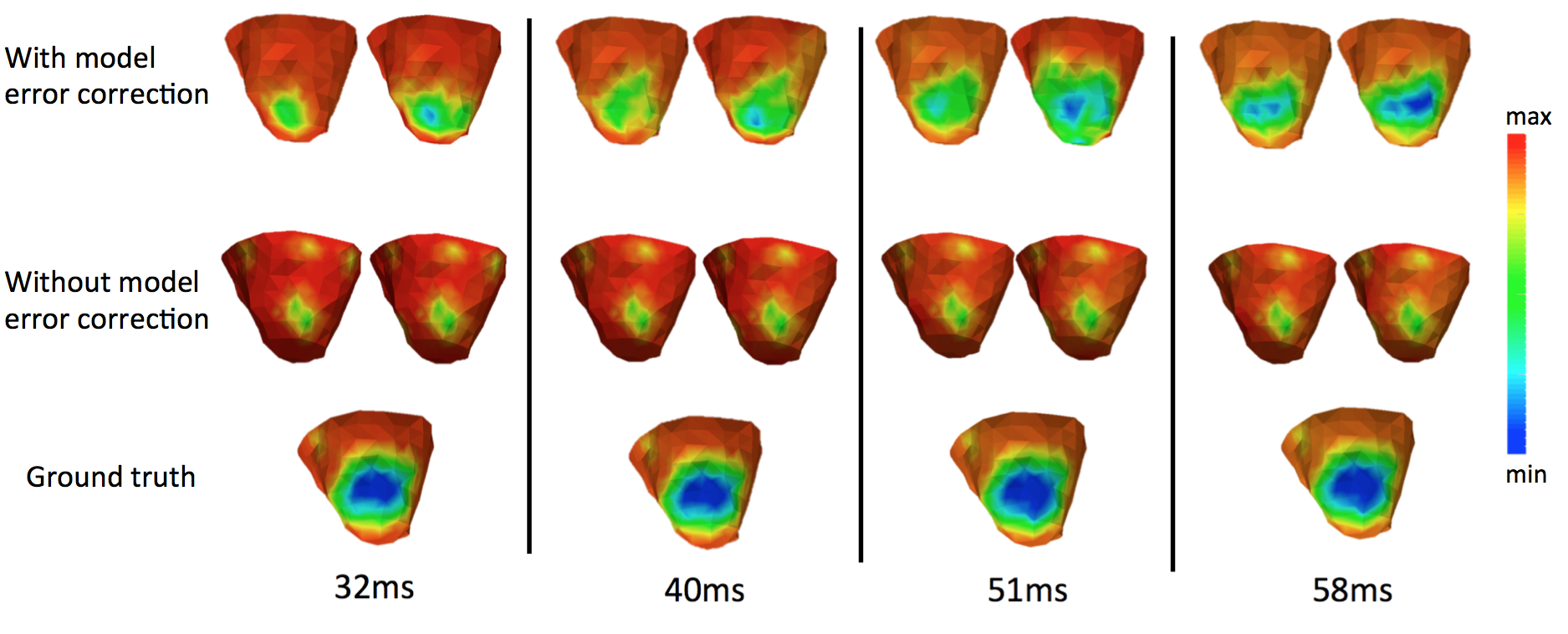}
\caption{Snapshots of the action potential at different time instants.  For each method at each time instant, the figure on the left shows the prediction from the \emph{a priori} EP model; the figure on the right shows the estimated result.}
\label{Modelcorrection}
\end{figure}
\begin{figure}[tb!]
\centering
\includegraphics[width=.65\textwidth]{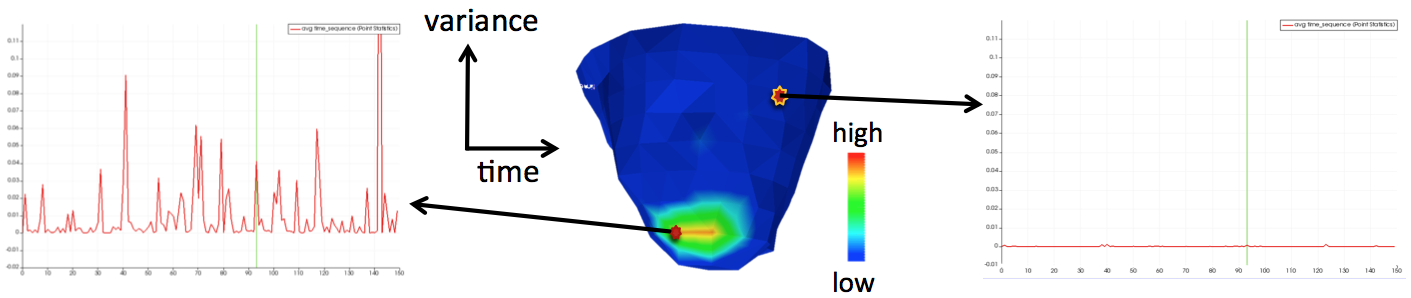}
\caption{ Centre shows spatial map of variance at various locations of the heart at a time. Two sides show temporal plots of the estimated variance at a site of scar tissue (left) and healthy tissue (right) }
\label{Var}
\end{figure}
We test our method in a set of 17 experiments on image-derived heart-torso models with the myocardial scar set according to the AHA 17-segment model of the LV. 
120-lead ECG data are simulated and corrupted with 20 dB Gaussian noise for inverse reconstruction. 
From the reconstructed action potential, 
activation time is extracted and scar is identified as the regions 
with an activation time larger than 
a threshold value. 
The accuracy of the detected scar is measured with two metrics: a) dice coefficient=2($S_1\cap S_2$)/($S_1 \cup S_2$), where $S_1$ and $S_2$ are reconstructed and true regions of scar, b) distance from the reconstructed center to actual center of scar.   

Fig. \ref{Modelcorrection} gives an example where a myocardial scar is located towards the anterior base of the LV. The \emph{a priori} EP model, however, assumes normal tissue property throughout the myocardium. As a result, model error exists in the predictions made by the EP model once the action potential propagates through the region of the myocardial scar. Compared to the method without model error correction, the proposed method is able to make significant improvements in two fronts: 1) the presented method corrects model errors significantly to drive the solution closer to the ground truth at each time instant, and 
2) the reconstructed solution is corrected toward the ground truth much earlier in time. 
As shown in Fig. \ref{Var}, 
higher variance of the wavefront is obtained 
at the region near myocardial scar, signifying possible errors in the model at those regions of the wavefront. This increase in prior variance of wavefront helps  make significant data-driven correction to the \emph{a priori} prediction. The temporal trace of the variance also shows that the estimated variance is  consistently  higher around the scar than that around the healthy tissue, suggesting low model errors in the latter region. 

Fig. \ref{ATandmetric}b lists the quantitative accuracy of the detected scar with and without model error estimation, where a statistically significant improvement of accuracy is obtained with model error estimation (unpaired-\emph{t} tests, p$<$0.005). 
Note that, Without error estimation, 
late activation due to scar is not captured in 50\% of the cases, 
and the metrics are calculated only from those where late activation is successfully reconstructed. 
Several examples of the reconstructed activation time 
are shown in Fig. \ref{ATandmetric}a, 
demonstrating the 
improvement of the delayed activation 
reconstructed at the region of myocardial scar 
by the presented method.

\begin{figure}[tb!]
\centering
\includegraphics[width=.8\textwidth]{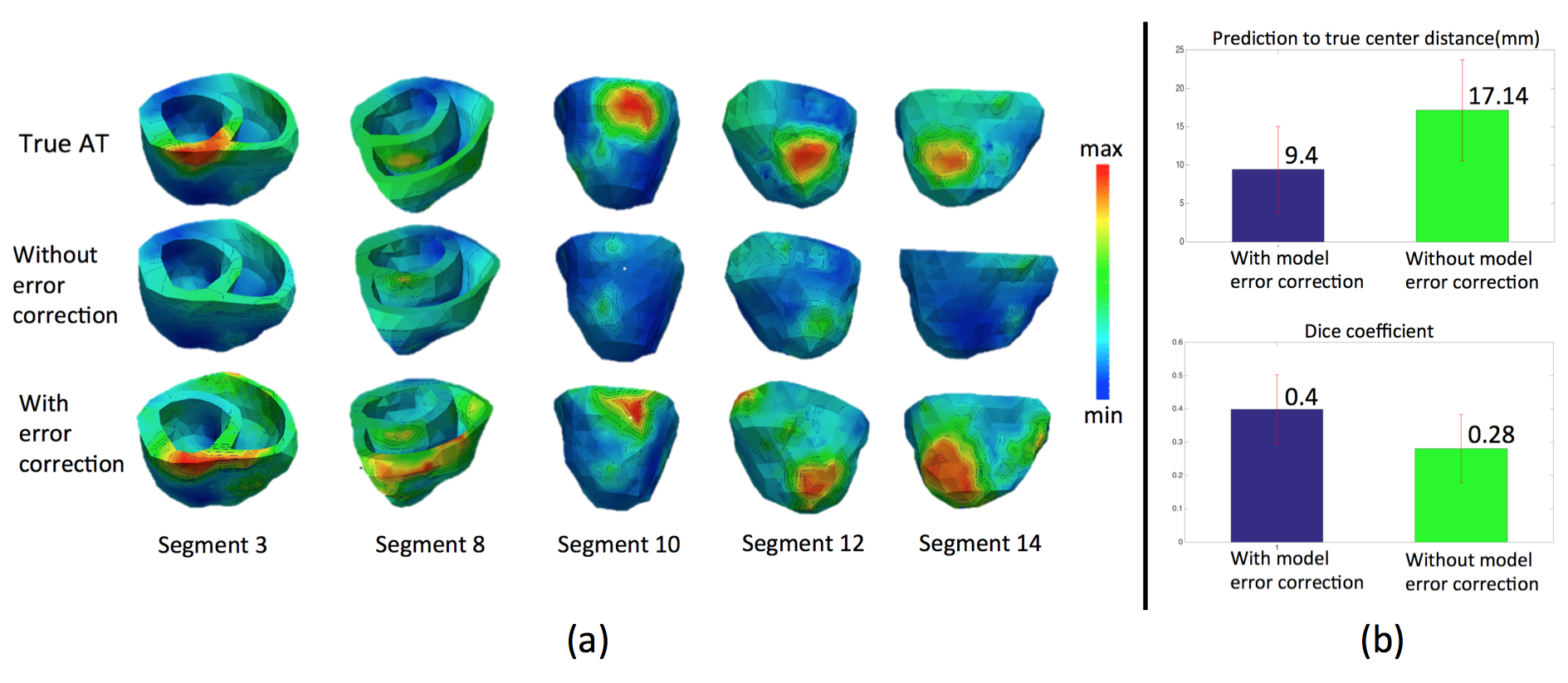}
\caption{a) Activation time maps. Regions with high values of activation time correspond to regions of scar. b) Comparison of the accuracy of scar detection with and without model error correction.  }
\label{ATandmetric}
\end{figure}

\begin{figure}[t!]
\centering
\includegraphics[width=.8\textwidth]{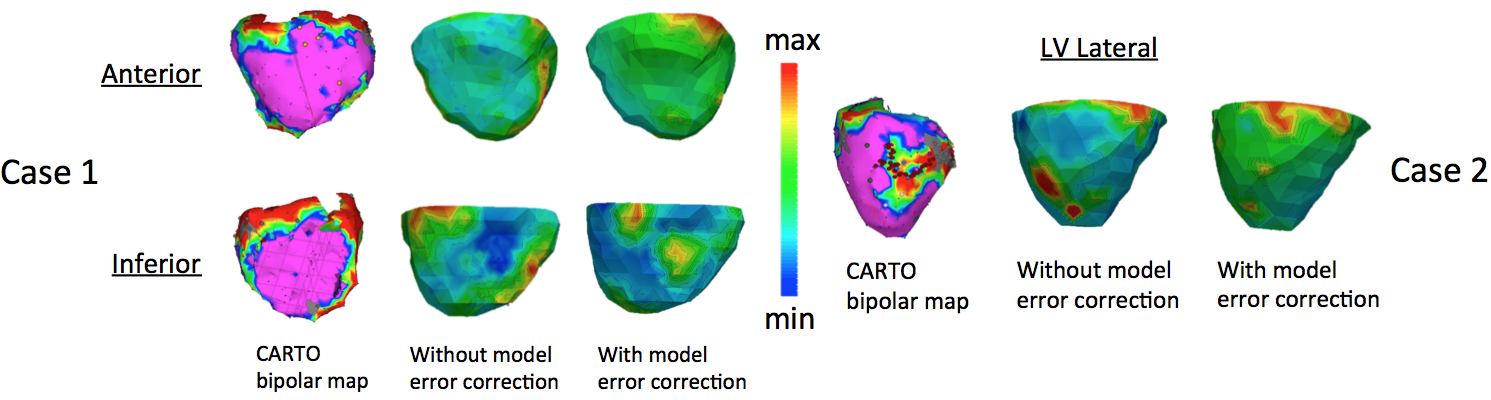}
\caption{Real data results of activation time maps versus \emph{in-vivo} bipolar voltage maps. In bipolar voltage maps, red regions correspond to scar and purple regions correspond to healthy tissue. In the reconstructed activation time maps, regions of high activation time correspond to scar.  }
\label{Realdata}
\end{figure}

\subsubsection{Real-data Experiment:}
Two real-data case studies are performed on patients who underwent catheter ablation due to scar-related ventricular arrhythmia. From 120-lead ECG data, action potential is reconstructed with and without model error estimation. Bipolar voltage data from \emph{in-vivo} catheter mapping is used as reference to evaluate the  detected scar regions. 
As shown in Fig.~\ref{Realdata}, in Case 1, the presented method  with model error correction is able to identify the area of scar at both  the anterior and inferior base. In contrast, reconstruction without error correction is only able to find scar in the inferior basal region. For Case 2, results from both methods are similar if one were to look at the region of high activation time (red region). However, at the mid-lateral region where voltage map shows the presence of scar, the activation time in the reconstruction with error correction is still high (green), while it is low (blue) for the reconstruction without error correction.
\section{Conclusion}
This paper presents a novel approach to model and estimate the error in \emph{a priori} EP models by exploiting its sparsity in the gradient domain. Experiments show promising results in the ability of the presented method to timely 
capture the error in the \emph{a priori} model 
at the correct spatial location. 
The next step would be to 
utilize the estimated variance 
to correct the model, 
\emph{e.g.}, 
by facilitating the estimation of 
its spatial parameters. 

\section*{Acknowledgement}
This work is supported by the National Science Foundation under CAREER Award ACI-1350374 and the National Institute of Heart, Lung, and Blood of the National Institutes of Health under Award R21Hl125998.

\bibliographystyle{splncs}
\bibliography{bibli1}

\end{document}